\newcommand\AtPageUpperMyright[1]{\AtPageUpperLeft{%
		\put(\LenToUnit{0.5\paperwidth},\LenToUnit{-1cm}){%
			\parbox{0.5\textwidth}{\raggedleft\fontsize{9}{11}\selectfont #1}}%
}}%
\newcommand{\conf}[1]{%
	\AddToShipoutPictureBG*{%
		\AtPageUpperMyright{#1}
	}
}
\newcommand{\tx}[1]{\text{#1}}
\newcommand{\vx}[1]{\mathbf{#1}}
\newtheorem{theorem}{Theorem}[section]
\newtheorem{corollary}[theorem]{Corollary}
\newtheorem{proposition}[theorem]{Proposition}
\newtheorem{remark}[theorem]{Remark}
\title{\LARGE \bf
	Integral Line-of-Sight Curved Path Following of Helical Microswimmers Actuated by 
	Rotating Magnetic Dipoles
}                                              
\author{Alireza Mohammadi$^{1,\ast}$ and Mark W. Spong$^{2}$
	\thanks{$^{1}$Alireza Mohammadi is with the Department of Electrical
		 \& Computer Engineering, University of Michigan, Dearborn.
		{\tt\small amohmmad@umich.edu}}%
	\thanks{$^{2}$Mark W. Spong is with the Erik Jonsson School of Engineering 
		\& Computer Science, University of Texas, Dallas.
		{\tt\small mspong@utdallas.edu}}%
	\thanks{$^\ast$Corresponding author: Alireza Mohammadi}
}
\begin{document}
	
	\maketitle
	\conf{International Conference on Manipulation, Automation and Robotics at Small Scales (MARSS), Toronto, Canada 25--29 July 2022}

	\thispagestyle{empty}
	\pagestyle{empty}
		
	\begin{abstract}                          
		This short paper investigates the problem of curved path following for helical microswimmers actuated by rotating magnetic dipoles.  The proposed solution, which relies on an integral line-of-sight (ILOS) guidance law, can be utilized in both below and beyond step-out frequency regimes. 
	\end{abstract}
\section{Introduction}
%
%
The underlying mechanism of propulsion for helical  magnetic microswimmers is the 
transduction of magnetic torque to mechanical power~\cite{zhang2009artificial}.  For 
\emph{in vivo} applications, the bacterial flagella-inspired morphology  of 
these microswimmers is argued to provide the best 
overall choice~\cite{abbott2009should}. 

In our prior work in~\cite{mohammadi2021integral}, we presented a straight-line path following control law that formally 
guarantees practical convergence of magnetic microswimmers to desired straight lines.  The solution relies on utilizing an integral line-of-sight 
(ILOS)-based reference vector field adopted from a popular family of guidance laws for marine surface vessels~\cite{breivik2004path,caharija2016integral}.  Our path following control scheme employed    
an optimal decision strategy (ODS)-based control synthesis approach to guarantee that the microswimmer always respects the step-out frequency constraint. However, the dynamics of the microswimmer beyond the step-out frequency regime were not considered in that work. Furthermore, it was assumed that the microswimmer was moving under  a \emph{uniform
 rotating magnetic field} generated by a proper actuator such as an electromagnetic coil-based system presented in~\cite{zhang2009artificial}. 

\noindent\textbf{Contributions of the paper.} This short paper extends our prior work in~\cite{mohammadi2021integral} in several important ways. First, we assume that the microswimmer is actuated by a rotating magnetic dipole in contrast to a  uniform field (see, e.g.,~\cite{wright2017spherical,chaluvadi2020kinematic} for such magnetic dipole-based actuation systems). Second, we present the non-smooth dynamics of microswimmers beyond the step-out frequency regime and provide conditions under which the microswimmer trajectories  vary continuously with respect to the swimmer's initial position. Finally, we modify our previously proposed ILOS-based guidance law to ensure convergence to \emph{curved} planar paths. 

The rest of this short paper is organized as follows. First, we present the unified non-smooth dynamical model of 
swimming helical microrobots below and beyond step-out frequencies in Section~\ref{sec:dynmodel}. Next, we present 
the main results in Section~\ref{sec:ContProb}. After  presenting the simulations in Section~\ref{sec:sims}, we 
conclude the paper in Section~\ref{sec:conc}.

\noindent{\textbf{Notation.}} We let $\mathbb{R}_{+}$ denote the set of all non-negative 
real numbers.  Given $\mathbf{x}\in\mathbb{R}^N$, we 
let $|\mathbf{x}|:=\sqrt{\mathbf{x}^{\top}\mathbf{x}}$ 
denote the Euclidean norm of $\mathbf{x}$, and if $\mathbf{x}\neq \mathbf{0}$, then we let the unit-normalized vector parallel 
to $\mathbf{x}$ be given by $\hat{\mathbf{x}}:=\tfrac{\mathbf{x}}{|\mathbf{x}|}$.  We let $\mathbbm{1}:\mathbb{R}\to \{0,1\}$ denote the Heaviside step function, 
where $\mathbbm{1}(x)=1$ if $x\geq 0$, and $\mathbbm{1}(x)=0$ if $x<0$. Given the integer $N\geq 1$, we let $\mathbf{I}_N$ 
denote the identity matrix  of order $N$.
%

\section{Dynamical Model of Magnetic Helical Microswimmers}\label{sec:dynmodel} 
In this section we present the non-smooth dynamics of helical microswimmers under 
rotating magnetic dipole fields. 

We consider the planar curve $\mathcal{C}$ parameterized by a scalar variable $\tau\in\mathbb{R}$ to which we would like the microswimmer to converge. For any given $\tau$, the planar path is denoted by $\mathbf{p}_d(\tau)\in \mathbb{R}^2$. For brevity, we assume that the gravitational acceleration vector $\mathbf{g}$ and $\mathcal{C}$ are in the same plane. Considering  $\vx{g}$, we fix the right-handed inertial coordinate frame  with axes $X_I$ and $Y_I$, and its origin located at the dipole center as depicted in Figure~\ref{fig:micro}. We let $\mathbf{p}=[p_x,\,p_y]^\top$ denote the microswimmer position vector. 
\begin{figure}[t!]
	\centering
	%
	%
    	\includegraphics[scale=0.32]{./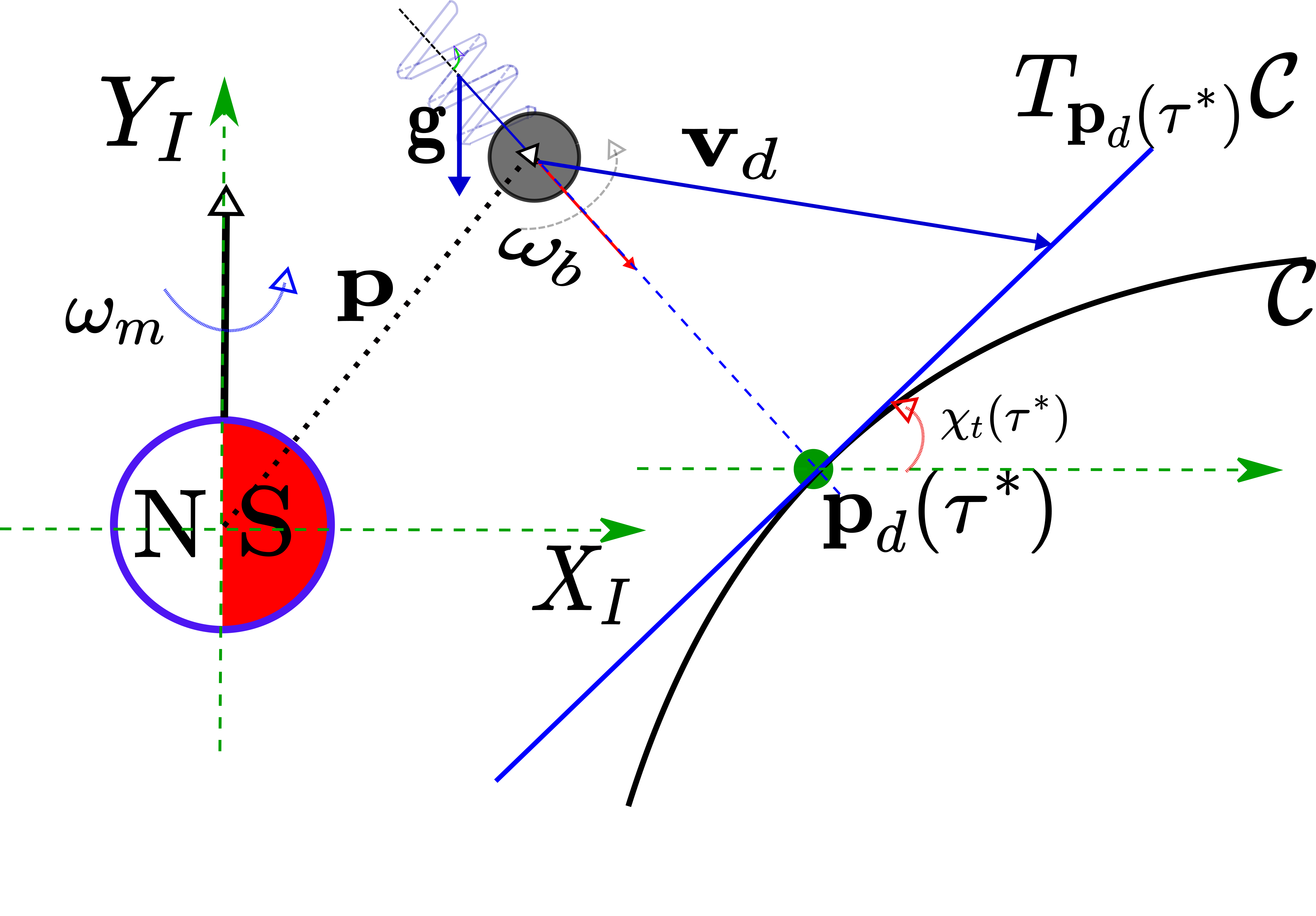}
    	\vspace{-1ex}
	%
	%
	\caption{\small The configuration of the helical microswimmer and the rotating magnetic dipole.}
	\vspace{-5ex}
	\label{fig:micro}
\end{figure}
It is assumed that a magnetic dipole located at the origin of the inertial coordinate frame is rotated about, and orthogonal  to, 
the vector $\pmb{\omega}_m$ parallel to $Y_I$. Such a rotating magnetic-dipole-field source has been implemented in  the SAMM system presented in~\cite{wright2017spherical}. 

Assuming  a fluidic environment of low-Reynolds-number and  rotational synchrony in-between the microswimmer and 
the dipole-induced magnetic field, the swimmer dynamics are governed by (see, e.g.,~\cite{mohammadi2021integral}
and~\cite{mahoney2014behavior})
\begin{equation}\label{eq:eq_dyn}
\dot{\vx{p}} = \mathbf{w} + {\vx{d}}_{\mu}, 
\end{equation}
where $\mathbf{d}_\mu$ is the vector of disturbances, and the dipole-induced microswimmer velocity vector $\mathbf{w}$
is given by~\cite{chaluvadi2020kinematic}
\begin{equation}\label{eq:eq_dyn2}
\mathbf{w}=\frac{\Gamma^{-1}(\vx{p}) \pmb{\omega}_m}{|\Gamma^{-1}(\vx{p}) \pmb{\omega}_m|}F(|\pmb{\omega}_m|), 
\end{equation}
in which the invertible matrix $\Gamma({\vx{p}}) = 3\frac{\vx{p}\vx{p}^\top}{|\vx{p}|^2} - \mathbf{I}_2$, with inverse $\Gamma^{-1}(\vx{p}) = \frac{3}{2}\frac{\vx{p}\vx{p}^\top}{|\vx{p}|^2} - \mathbf{I}_2$, captures the shape of the dipole field. Moreover, the vector $\pmb{\omega}_m$, which is the angular velocity of rotation of the dipole, can be considered as the control input for~\eqref{eq:eq_dyn}. In~\eqref{eq:eq_dyn2}, the step-out frequency effect is captured by the non-smooth function $F:\mathbb{R}\to\mathbb{R}$ given by 
\begin{equation}\label{eq:eq_dyn3}
F(x) = \beta \big(x - \mathbbm{1}(x^2 - \Omega_{\tx{SO}}^2)\sqrt{x^2 - \Omega_{\tx{SO}}^2}\big), 
\end{equation}
\noindent where $\beta$ is a constant depending on the physical/geometrical parameters of the microswimmer and its fluidic ambient environment, and  $\Omega_{\tx{SO}}$ is the step-out frequency beyond which the microswimmer speed rapidly
declines~\cite{mahoney2014behavior}. 

\section{Main Results}
\label{sec:ContProb}

\subsection{Continuity with respect to the initial conditions}
The dynamics of the microswimmer under a rotating dipole field in~\eqref{eq:eq_dyn}--\eqref{eq:eq_dyn3} while considering the step-out frequency effect, as captured by $F(\cdot)$ in~\eqref{eq:eq_dyn3}, are non-smooth and non-Lipschitz continuous. 

Any feedback control scheme designed for the  non-smooth dynamics in~\eqref{eq:eq_dyn}--\eqref{eq:eq_dyn3} must guarantee continuity of the solutions with respect to initial conditions for the following \emph{practical reasons}. First, there always exists difficulty in precise localization of microswimmers, especially in clinical use. Moreover, the family of centroid/covariance-based control schemes for swarms of microswimmers rely on applying a single global magnetic field to the whole swarm using the centroid position information~\cite{chaluvadi2020kinematic}. Continuity with respect to the microswimmer initial position ensures that the resulting microswimmer trajectories remain close to their nominal counterparts even if the applied control input does not incorporate the exact microswimmer position information. The following proposition provides a sufficient condition for ensuring the required continuity. 
\begin{proposition}
	Consider the microswimmer dynamics given by~\eqref{eq:eq_dyn}--\eqref{eq:eq_dyn3} under the feedback control input $\pmb{\omega}_m=\pmb{\omega}_m(\mathbf{p})$. If $\pmb{\omega}_m(\cdot)$  is a continuous function of $\mathbf{p}$, then  the resulting 
	closed-loop dynamics admit solutions that are continuous with respect to the initial microswimmer positions.
\end{proposition}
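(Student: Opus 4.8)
The plan is to reduce the statement to continuity of the closed-loop vector field and then to invoke a continuous-dependence result that does not require Lipschitz regularity. Substituting $\pmb{\omega}_m = \pmb{\omega}_m(\mathbf{p})$ into~\eqref{eq:eq_dyn}--\eqref{eq:eq_dyn2}, the closed loop becomes $\dot{\mathbf{p}} = f(\mathbf{p})$, where
\begin{equation*}
f(\mathbf{p}) = \frac{\Gamma^{-1}(\mathbf{p})\,\pmb{\omega}_m(\mathbf{p})}{|\Gamma^{-1}(\mathbf{p})\,\pmb{\omega}_m(\mathbf{p})|}\, F\big(|\pmb{\omega}_m(\mathbf{p})|\big) + \mathbf{d}_\mu .
\end{equation*}
The first task is to show $f$ is continuous on the physically meaningful region $\mathbf{p}\neq\mathbf{0}$. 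The matrix $\Gamma^{-1}(\mathbf{p})$ is a rational function of $\mathbf{p}$ that is well defined away from the dipole center, $\pmb{\omega}_m(\cdot)$ is continuous by hypothesis, and the normalization $\mathbf{v}\mapsto \mathbf{v}/|\mathbf{v}|$ is continuous off the origin; since $\Gamma^{-1}$ is invertible, $\Gamma^{-1}\pmb{\omega}_m$ vanishes only where $\pmb{\omega}_m=\mathbf{0}$, and there $F(0)=0$ forces $|\mathbf{w}|\to 0$, so setting $\mathbf{w}=\mathbf{0}$ on that set removes the apparent $\mathbf{0}/\mathbf{0}$ while preserving continuity. The additive term $\mathbf{d}_\mu$ enters exogenously and does not affect this argument.

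The crux is the continuity of the non-smooth factor $F$ across the step-out threshold. For a nonnegative argument, \eqref{eq:eq_dyn3} gives $F(x)=\beta x$ for $0\le x<\Omega_{\tx{SO}}$ and $F(x)=\beta\big(x-\sqrt{x^2-\Omega_{\tx{SO}}^2}\big)$ for $x\ge \Omega_{\tx{SO}}$. I would observe that both one-sided limits at $x=\Omega_{\tx{SO}}$ equal $\beta\,\Omega_{\tx{SO}}$, because the radical $\sqrt{x^2-\Omega_{\tx{SO}}^2}$ tends to zero as $x\downarrow \Omega_{\tx{SO}}$ precisely where the Heaviside factor switches on. Hence $F$ has no jump and is continuous on $\mathbb{R}_+$; composing with the continuous map $\mathbf{p}\mapsto|\pmb{\omega}_m(\mathbf{p})|$ and multiplying by the continuous unit direction shows $f$ is continuous.

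With continuity of $f$ established, existence of solutions follows from Peano's theorem, so the dynamics indeed \emph{admit} solutions. For the dependence on initial data, the main obstacle is precisely that $F$ fails to be Lipschitz at $\Omega_{\tx{SO}}$ (its right derivative blows up as $x\downarrow\Omega_{\tx{SO}}$), so the textbook Picard--Lindel\"of continuous-dependence argument does not apply. I would circumvent this with a compactness argument rather than a contraction: fixing a compact time interval and a sequence of initial positions $\mathbf{p}_0^{(n)}\to\mathbf{p}_0$, the corresponding solutions remain in a compact set on which $f$ is bounded, so $|\dot{\mathbf{p}}^{(n)}|\le \sup|f|$ makes them uniformly Lipschitz in $t$, hence equicontinuous and uniformly bounded. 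Arzel\`a--Ascoli then extracts a uniformly convergent subsequence whose limit, after passing to the limit in the integral form $\mathbf{p}^{(n)}(t)=\mathbf{p}_0^{(n)}+\int_{t_0}^t f(\mathbf{p}^{(n)}(s))\,ds$ and using continuity of $f$, is itself a solution through $\mathbf{p}_0$. This yields a solution branch depending continuously on the initial position, which is exactly the asserted conclusion; I would further note that under uniqueness the entire flow map inherits this continuity.
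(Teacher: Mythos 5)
Your continuity analysis of the closed-loop field is sound and in fact more explicit than the paper's: you correctly verify that $F$ in~\eqref{eq:eq_dyn3} is continuous across the step-out threshold (both one-sided limits at $\Omega_{\tx{SO}}$ equal $\beta\Omega_{\tx{SO}}$), that the normalization singularity where $\pmb{\omega}_m(\mathbf{p})=\mathbf{0}$ is removable because $F(0)=0$, and that the genuine obstruction is the loss of Lipschitz continuity of $F$ at $\Omega_{\tx{SO}}$. The gap is in the last step. Your Arzel\`a--Ascoli argument only shows that from any sequence of solutions with initial data $\mathbf{p}_0^{(n)}\to\mathbf{p}_0$ one can extract a subsequence converging uniformly to \emph{some} solution through $\mathbf{p}_0$. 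Without uniqueness of the solution through $\mathbf{p}_0$, different subsequences may converge to different solutions, and the conclusion ``a solution branch depending continuously on the initial position'' does not follow --- this is exactly the situation of a Peano funnel, where one only gets upper semicontinuity of the solution \emph{set}. You acknowledge this by appending ``under uniqueness the entire flow map inherits this continuity,'' but uniqueness is not an optional remark here: it is the missing load-bearing step, and it is not implied by continuity of $f$ (the non-Lipschitz point $\Omega_{\tx{SO}}$ is precisely where uniqueness could a priori fail).

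The paper's (sketched) proof supplies exactly this ingredient by a different route: it observes that the continuous closed-loop field is one-sided Lipschitz --- the non-Lipschitz branch of $F$ beyond step-out is monotonically decreasing with an infinitely steep slope, which is harmless for a one-sided Lipschitz estimate even though it destroys the two-sided one --- deduces uniqueness of solutions from that (Proposition~2 in~\cite{cortes2008discontinuous}), and then invokes the classical theorem that continuity of the vector field plus uniqueness of solutions yields continuous dependence on initial conditions (Theorem~4.3, p.~59 of~\cite{coddington1955theory}). Your compactness argument is essentially the proof of that classical theorem, but its final step (upgrading subsequential convergence to convergence of the full sequence to \emph{the} solution) is exactly where uniqueness is consumed. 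To repair your proof, either establish the one-sided Lipschitz estimate for $f$ directly, or give some other uniqueness argument at the step-out locus; as written, the proposal proves existence and closedness of the solution set under limits, but not the stated continuity with respect to initial positions.
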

\begin{proof}
	For brevity, we provide a sketch of the proof. A continuous closed-loop feedback input makes the resulting dynamics one-sided Lipschitz. Therefore, the closed-loop dynamics admit unique solutions (see, e.g., Proposition~2 in~\cite{cortes2008discontinuous}). Being a continuous closed-loop vector field and having a unique solution, the statement of the proposition follows from Theorem 4.3 in~\cite{coddington1955theory} (see p. 59). 
\end{proof}
\begin{remark}
	Continuity  of solutions with respect to initial conditions are also necessary for stability-based 
	switching schemes for nonlinear system stabilization (see, e.g.,~\cite{leonessa1998nonlinear}).  
\end{remark}
\subsection{Guidance law for curved path following}
Considering the curved path $\mathcal{C}$ in Section~\ref{sec:dynmodel} and the microswimmer closed-loop 
dynamics in~\eqref{eq:eq_dyn}--\eqref{eq:eq_dyn3}, 
we propose an ILOS-based guidance law that is adopted from the marine 
surface vessel control literature (see, 
e.g.,~\cite{breivik2004path,caharija2016integral}). 

Following~\cite{breivik2004path}, it is assumed that at each $\mathbf{p}$, there exists a well-defined value of $\tau$  that minimizes the Euclidean distance between $\mathbf{p}$ and $\mathbf{p}_d(\tau)$, given by $\tau^\ast=\text{argmin}_{\tau\in\mathbb{R}} |\mathbf{p} - \mathbf{p}_d(\tau)|$.  Given the microswimmer position vector  $\mathbf{p}$ and its associated nearest point $\mathbf{p}_d(\tau^\ast)$ on the path, we denote the line that is tangent to $\mathcal{C}$ at $\mathbf{p}_d(\tau^\ast)$ by $T_{\mathbf{p}_d(\tau^\ast)}\mathcal{C}$ and the tangent line slope by $\chi_t(\tau^\ast)$ (see Figure~\ref{fig:micro}). The \emph{control objective} is to make the microswimmer to converge to and move on $\mathcal{C}$.  We propose the following guidance law, which is based on guiding the microswimmer from  its position $\mathbf{p}$ to the tangent line to $\mathcal{C}$ at the nearest point $\mathbf{p}_d(\tau^\ast)$ on $\mathcal{C}$ (see  Figure~\ref{fig:micro}), 
\begin{equation}\label{eq:los}
\vx{v}^{\tx{d}}(\vx{p},s) = \alpha_\text{d}\vx{R}_{\chi_t(\tau^\ast)} \begin{bmatrix} \Delta_\tx{LOS} \\ -| \vx{p}-\mathbf{p}_d(\tau^\ast)| \sin(\Delta \theta) - 
\sigma_0 s \end{bmatrix},
\end{equation}
where $\Delta \theta:=\text{atan2}(p_y-y_d(\tau^\ast),p_x-x_d(\tau^\ast))-\chi_t(\tau^\ast)$, and  the matrix $\mathbf{R}_{\chi_t(\tau^\ast)}$ is the rotation matrix by the slope angle $\chi_t(\tau^\ast)$. In~\eqref{eq:los}, the design parameters $\alpha_\text{d}>0$ and the integral gain $\sigma_0>0$ are constant. Additionally, the parameter $\Delta_\tx{LOS}>0$, which determines the point along 
$T_{\mathbf{p}_d(\tau^\ast)}\mathcal{C}$ at which the microrobot should be pointed, is called 
the \emph{look-ahead distance}. Furthermore, the variable $s$ is governed by  
\begin{equation}\label{eq:losdynvar}
\dot{s} = -k_\text{d}s + 
\Delta_{\tx{LOS}}\frac{| \vx{p}-\mathbf{p}_d(\tau^\ast)| \sin(\Delta \theta)}{(| \vx{p} -\mathbf{p}_d(\tau^\ast)| 
	\sin(\Delta \theta)+\sigma_0 s)^2 + \Delta_{\tx{LOS}}^2},
\end{equation}
\noindent where the damping gain $k_\text{d}>0$ is a constant design parameter.  In the presence of disturbances that drive the
microswimmer away from its desired path, embedding an integral control action through the  dynamic variable $s$  builds up a 
corrective compensation in the reference vector field. 

Considering the ILOS-based guidance vector field~\eqref{eq:los} and~\eqref{eq:losdynvar}, we propose using the control law 
\begin{equation}
\pmb{\omega}_m = \Gamma(\vx{p}) \vx{v}^{\tx{d}}(\vx{p},s),
\label{eq:contrLaw}
\end{equation}
for the microswimmer dynamics in~\eqref{eq:eq_dyn}--\eqref{eq:eq_dyn3}. Under this control law, the step-out frequency constraint 
$\pmb{\omega}_m^\top \pmb{\omega}_m \leq \Omega_{\text{SO}}^2$ is not necessarily respected. Similar to the arguments in~\cite{mohammadi2021integral}, it can be shown that the solution to the quadratic programming problem
\begin{equation}\label{eq:ODS}
\begin{aligned}
\underset{\pmb{\omega}_m}{\text{min.}} \; & \big\{ \frac{1}{2} \big(\Gamma(\mathbf{p})\pmb{\omega}_m\big)^{\top} \mathbf{A}_{\mu} \big(\Gamma(\mathbf{p})\pmb{\omega}_m\big) + \vx{G}^{\top}_{\mu}(\vx{p},s) \Gamma(\mathbf{p})\pmb{\omega}_m \big\} \\
\text{subject to} & \;\;\; \pmb{\omega}_m^{\top} \pmb{\omega}_m \leq \Omega_{\text{SO}}^2, 
\end{aligned}
\end{equation}
where
\begin{equation}\label{eq:ODS2}
\textcolor{black}{
	\begin{aligned}
	\vx{A}_{\mu} := \frac{1}{\Omega_0^2} \mathbf{I}_2, \\
	\vx{G}_{\mu}(\vx{p},s) := \frac{-\vx{A}_{\mu}}{\beta} \big( \vx{d}_{\mu} -   \vx{v}^{\tx{d}}(\vx{p},s) \big), 
	\end{aligned}
}
\end{equation}
for some $\Omega_0>0$ respects the step-out frequency constraint. 

\section{Simulation Results}
\label{sec:sims}
In this section we present numerical simulation results to validate 
the performance of the proposed control method described in 
Section~\ref{sec:ContProb}. The physical parameters of the microswimmer are chosen according 
to~\cite{mahoney2011velocity} (also, see Tables~1 and~2 in~\cite{mohammadi2021integral}). The path $\mathcal{C}$ is a circle of radius $1.5$ cm centered at the origin.  The controller parameters are chosen to be $\alpha_d=0.01$, $k_d=0.15$, and $\Delta_{\tx{LOS}}=0.75$ mm. The step-out frequency is chosen to be $\Omega_{\text{SO}}=2\pi f_{\text{SO}}$, with  $f_{\text{SO}}=2.8$ Hz. We perform the numerical simulations under two control scenarios. In the first scenario (see Figures~\ref{fig:path_nominal}(a, c)), we utilize the control input in~\eqref{eq:contrLaw}. In the second scenario (see Figures~\ref{fig:path_nominal}(b, d)), we solve the quadratic programming problem in~\eqref{eq:ODS} with  $\Omega_0=2\pi \tfrac{\text{rad}}{\text{s}}$, to which a closed-form solution exists (see~\cite{mohammadi2021integral} for further details), for generating a proper control input $\pmb{\omega}_m$ respecting  the step-out frequency constraint. In Figure~\ref{fig:path_nominal}, the justification for the slower speed in the first scenario is that the microswimmer is allowed to rotate faster than the magnetic field step-out frequency. Therefore,  its speed decreases with respect to the second scenario where the controller respects the step-out frequency constraint. 
%
%
%
\begin{figure}
	\centering
	\begin{subfigure}{0.32\textwidth}
		\includegraphics[width=0.95\textwidth]{./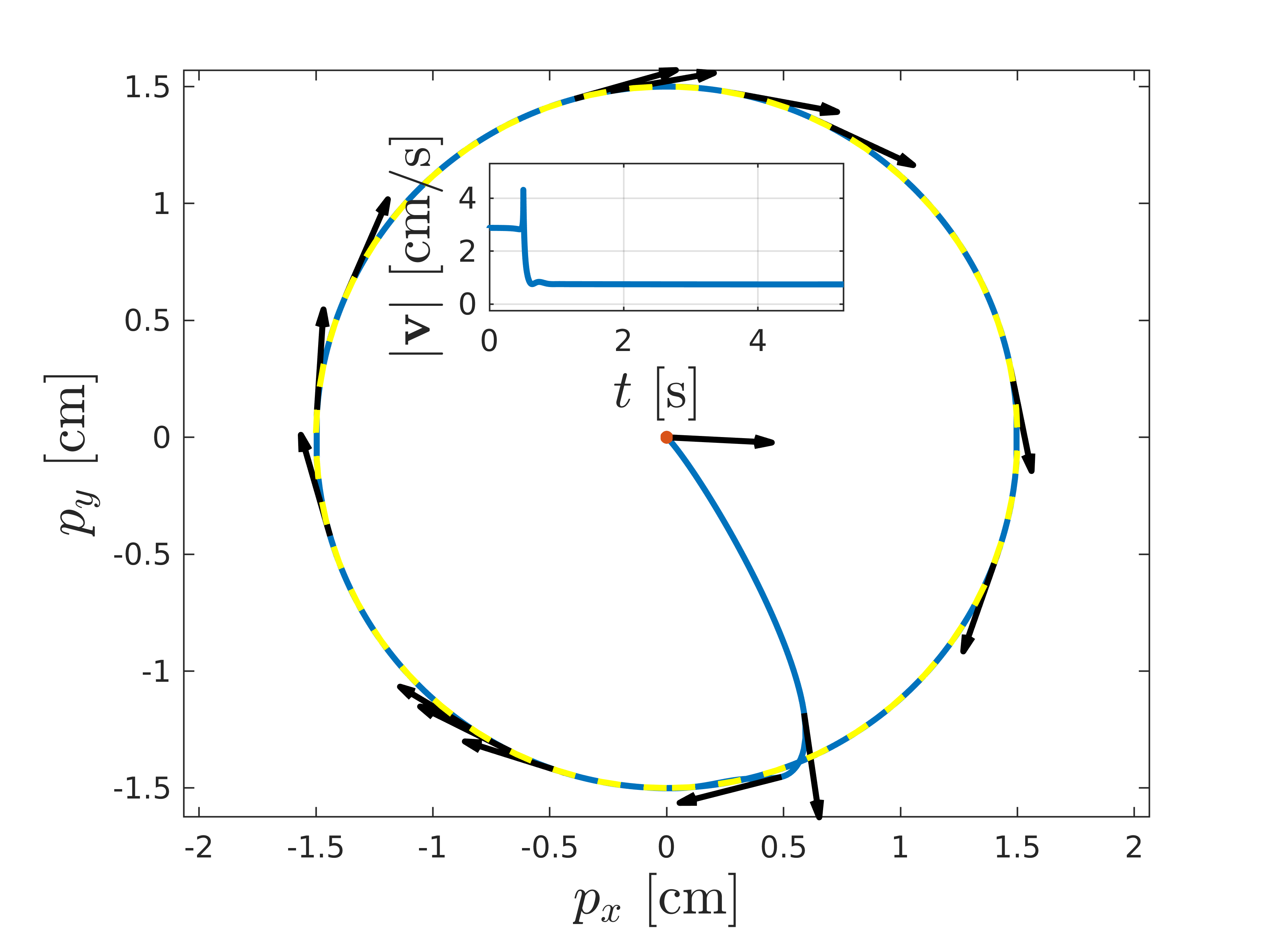} 
		\caption{}
	\end{subfigure}
    \hspace{-1ex}
	\begin{subfigure}{0.32\textwidth}
		\includegraphics[width=0.95\textwidth]{./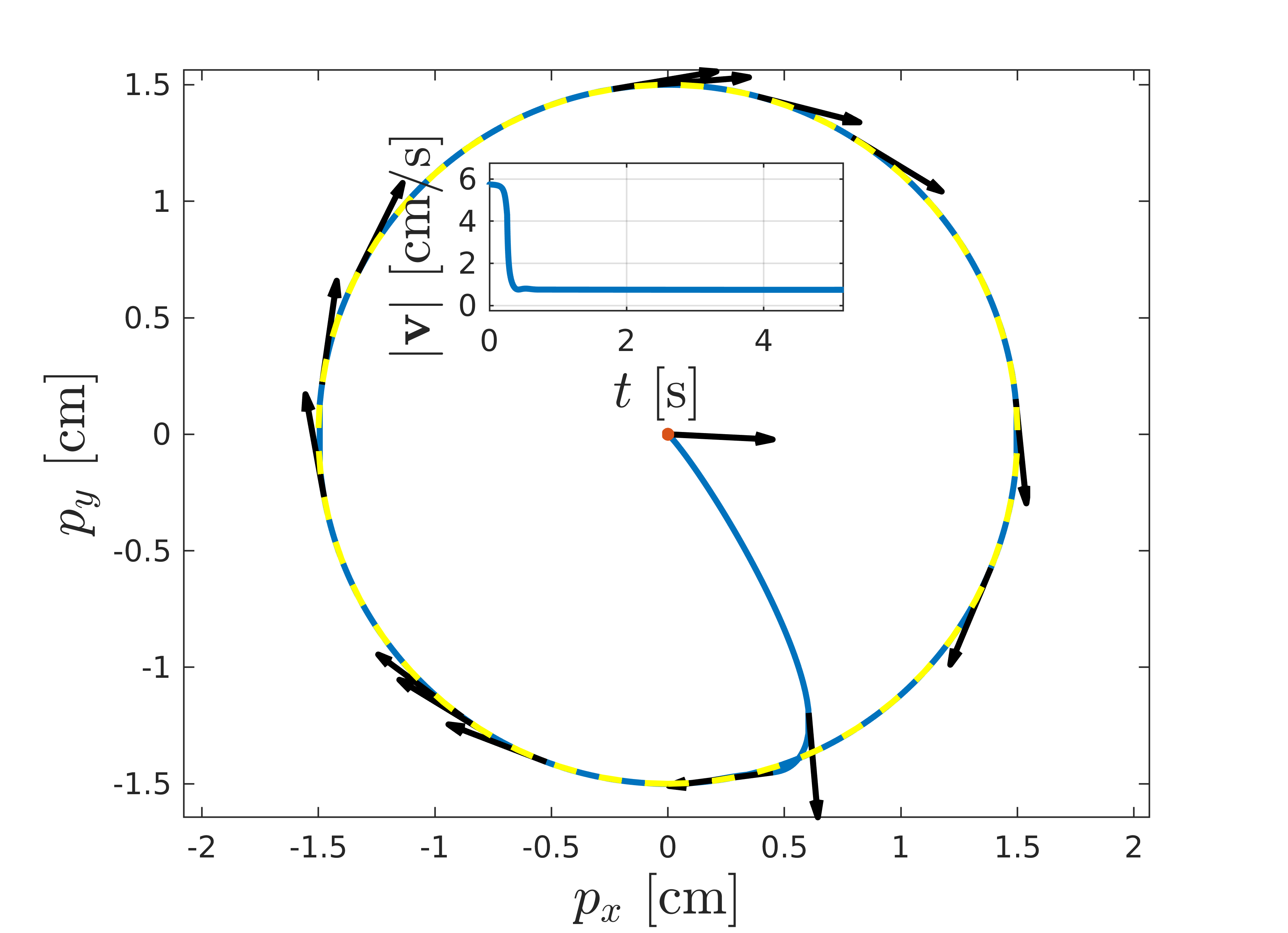} 
		\caption{}
	\end{subfigure}
	    \hspace{-1ex}
	\begin{subfigure}{0.32\textwidth}
		\includegraphics[width=0.85\textwidth]{./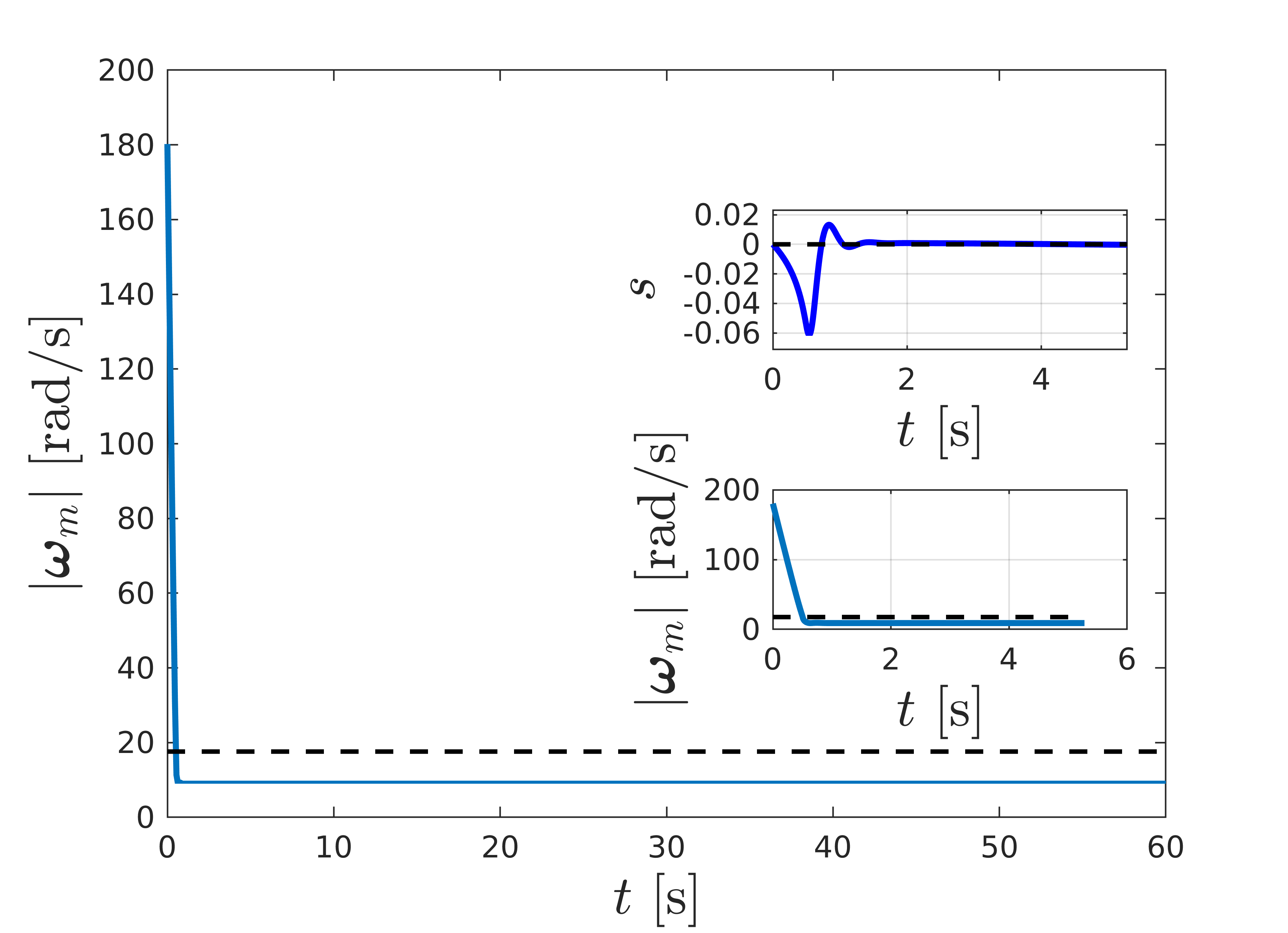} 
		\caption{}
	\end{subfigure}
    \hspace{-1ex}
	\begin{subfigure}{0.32\textwidth}
		\includegraphics[width=0.85\textwidth]{./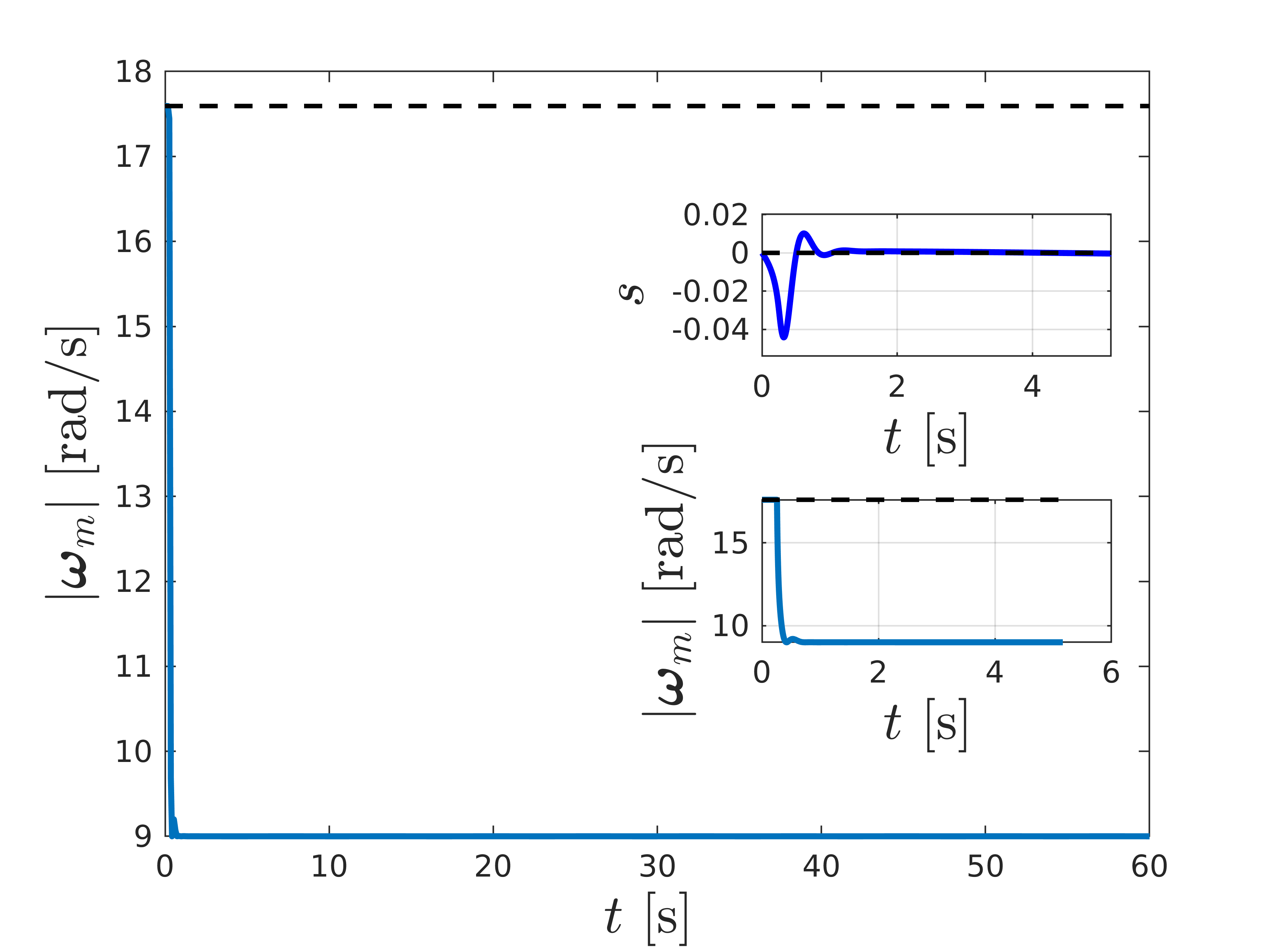} 
		\caption{}
	\end{subfigure}
	\caption{Microswimmer position, the time profile of the 
		velocity magnitude, the control input magnitude time profile, and the 
		time profile of the adaptive variable $s$: 
		(a, c)  not enforcing the step-out frequency constraint, and 
		(b, d)  enforcing the step-out frequency constraint.}
	\vspace{-4ex}
	\label{fig:path_nominal}
\end{figure}

\section{Concluding Remarks}
\label{sec:conc}
This short paper extends the straight-line path following control scheme of~\cite{mohammadi2021integral} to solving the problem of curved path following for helical microswimmers actuated by rotating magnetic dipoles.  Future research will include extending the presented results to control of microswimmer swarms under a global magnetic dipole field. 

\bibliographystyle{ieeetran}        
\bibliography{PhDBiblio}           

\end{document}